\documentclass[11pt, a4paper]{article}

\usepackage{amsthm}
\usepackage[utf8]{inputenc}
\usepackage{geometry}
\geometry{margin=1in}
\usepackage{amsmath, amssymb}
\usepackage{algorithm}
\usepackage{algpseudocode}
\usepackage{listings}
\usepackage{xcolor}
\usepackage{hyperref}
\usepackage{graphicx}
\usepackage{dirtytalk}
\usepackage{tikz}
\usepackage{enumitem}
\usepackage{subcaption}

\usetikzlibrary{positioning}

\newtheorem{fact}{Fact} 
\newtheorem{theorem}{Theorem}
\newtheorem{lemma}[theorem]{Lemma}

\newtheorem{corollary}{Corollary}

\definecolor{codegreen}{rgb}{0,0.6,0}
\definecolor{codegray}{rgb}{0.5,0.5,0.5}
\definecolor{codepurple}{rgb}{0.58,0,0.82}
\definecolor{backcolour}{rgb}{0.95,0.95,0.92}

\lstdefinestyle{mystyle}{
    backgroundcolor=\color{backcolour},   
    commentstyle=\color{codegreen},
    keywordstyle=\color{magenta},
    numberstyle=\tiny\color{codegray},
    stringstyle=\color{codepurple},
    basicstyle=\ttfamily\footnotesize,
    breakatwhitespace=false,         
    breaklines=true,                 
    captionpos=b,                    
    keepspaces=true,                 
    numbers=left,                    
    numbersep=5pt,                  
    showspaces=false,                
    showstringspaces=false,
    showtabs=false,                  
    tabsize=2,
    language=Python
}

\lstset{style=mystyle}

\title{\textbf{Dynamic framework for edge-connectivity maintenance of simple graphs}}
\author{B{\l}a{\. z}ej Wr{\' o}bel \thanks{Wroc{\l}aw University of Science and Technology, Poland}}
\date{\today}

\begin{document}

\maketitle
\abstract{
We present a framework for dynamically maintaining $k$-edge-connectivity of an undirected simple graph $G$ under edge insertions and deletions, where $k$ is a fixed constant. After an edge insertion, the algorithm identifies and removes a distinct redundant edge to maintain sparsity, in $O(k \log n)$ amortized time. After an edge deletion that reduces $\lambda(G)$ below $k$, the algorithm restores $k$-edge-connectivity by adding at most two new edges (excluding the deleted edge), in $O(k^{3/2} n^{3/2})$ time. The insertion procedure combines Nagamochi–Ibaraki sparse certificates with Link-Cut Trees; the deletion procedure uses a single maximum-flow computation on the sparsified graph. Throughout all updates, the graph is maintained with $O(kn)$ edges.
}

\section{Introduction}
\label{section:introduction}
Edge-connectivity is one of the fundamental measures of network reliability and fault tolerance. A graph is $k$-edge-connected if the removal of fewer than $k$ edges cannot disconnect it. We study the problem of actively maintaining $k$-edge-connectivity: given a graph $G$ with $\lambda(G) \geq k$, the goal is to preserve this invariant under edge insertions and deletions by modifying the graph itself -- not merely reporting whether the invariant holds. An edge insertion may make an existing edge redundant for $k$-edge-connectivity, while an edge deletion may reduce $\lambda(G)$ below $k$. We address both cases:

\begin{enumerate}
    \item Redundancy Elimination (Post-Addition): after insertion of a new edge $e_{new}=\{u,v\}$, local edge-connectivity between its endpoints may be greater than $k$. To maintain structural sparsity -- specifically ensuring $|E| = O(kn)$ -- we identify and remove an existing edge $e_{old}=\{x,y\}$ ($e_{old} \neq e_{new}$) for which the condition $\lambda(x, y; (G \cup \{e_{new}\}) \setminus \{e_{old}\}) \ge k$ (see   Section~\ref{section:preliminaries-and-notation} for definition of $\lambda(x, y; G)$) is met. This ensures that the graph remains sparse and $k$-edge-connected.

    \item Connectivity Restoration (Post-Removal): after removal of an existing edge $e_{del}$, the global edge-connectivity may drop below the required threshold, i.e., $\lambda(G \setminus \{e_{del}\}) < k$. In such cases, the framework computes and inserts a minimal set of augmenting edges $E_{aug}$ ($e_{del} \notin E_{aug}$) such that the condition $\lambda((G  \setminus \{e_{del}\}) \cup E_{aug}) \geq k$ holds. 
    
\end{enumerate} \par

Beyond its theoretical interest, our framework is motivated by settings where a network's link structure must be actively maintained under dynamic changes. A natural application of Connectivity Restoration arises in survivable telecommunications network design, where nodes represent routers and edges represent physical communication links. Such networks are designed to be sparse and $k$-edge-connected so that data transmission survives up to $k - 1$ simultaneous link failures~\cite{Grotschel92}. When a link is lost -- due to hardware malfunction or fiber cut -- the operator must provision new links to restore the required redundancy. Our framework models this directly: the failed link is an edge deletion, and the algorithm computes the minimal set of new links to restore invariant $\lambda(G) \geq k$. \par

In wireless sensor networks, topology control protocols seek sparse communication topologies that preserve connectivity, since maintaining fewer active links reduces both energy consumption and radio interference~\cite{Santi2005}. Our Redundancy Elimination task provides an algorithmic primitive aligned with this objective: when a new link becomes available, the framework detects and removes an existing link that is no longer necessary for $k$-edge-connectivity, keeping the network sparse with $O(kn)$ edges. \par

\section{Related work}
\label{section:related-work}

Computing the minimum cut in an undirected graph is a classic problem with a long history. Early approaches relied on network flow techniques; the Gomory-Hu~\cite{Gomory61} tree construction reduces the problem to $n - 1$ maximum flow computations. Karger~\cite{Karger2000} introduced a randomized algorithm based on tree packings that finds the global minimum cut in $O(m \log^{3}n)$ time with high probability. In the deterministic setting, Henzinger, Rao, and Wang~\cite{Henzinger2020} gave an algorithm for determining graph's edge-connectivity, running in $O(m \log^{2}(n)  \log(\log^{2}(n)))$ time. Li~\cite{Li2021} presented a deterministic algorithm for finding minimum-cuts in weighted undirected graphs running in $m^{1+o(1)}$ time. \par

The problem of maintaining the minimum cut in a dynamically changing graph has been studied extensively, with a primary focus on efficiently reporting the value of the minimum cut rather than actively enforcing a connectivity invariant. Thorup~\cite{Thorup2007} gave a deterministic fully dynamic algorithm that maintains the value of the minimum cut in $\tilde{O}(\sqrt{n})$ (see Section~\ref{section:preliminaries-and-notation}) worst-case time per update. For the incremental setting, Goranci et al.~\cite{Goranci2018} achieved $O (\log^{3}(n)\log(\log^{2}(n)))$  amortized update time by maintaining a set of edge-disjoint spanning forests -- a sparse certificate that preserves connectivity properties. Our approach to redundancy elimination (Section~\ref{section:restoration-after-edge-addition}) builds on a similar structural insight: by actively maintaining a certificate of size $O(kn)$, we can efficiently identify and remove redundant edges in $O(k \log n)$ amortized time per edge addition. For connectivity restoration (Section~\ref{section:restoration-after-edge-removal}), we exploit the specific properties of Dinic's algorithm~\cite{Dinitz70} on unit-capacity networks. While the general weighted case requires $O(n^{2}m)$ time, Even and Tarjan~\cite{Shimon75} showed that the complexity improves to $O(\min(n^{2/3}, m^{1/2})m)$ for unit capacities. This bound is critical to our edge-connectivity restoration strategy.

\section{Preliminaries and notation}
\label{section:preliminaries-and-notation}

We consider unweighted, simple, undirected graphs $G = (V, E)$ with a vertex set $V$ and edge set $E$, where $|V| = n$ and $|E| = m$, using these parameters primarily to describe the computational complexity of the proposed algorithms. To simplify the asymptotic analysis, we employ the notation $\tilde{O}(f(n))$ to denote $O(f(n) \log^k n)$ for some constant $k \ge 0$, thereby suppressing polylogarithmic factors; crucially, this notation never obscures polynomial factors. When considering several graphs, we denote the respective sets of a graph $G$ as $V(G)$ and $E(G)$ to avoid ambiguity. Furthermore, we denote an undirected edge connecting vertices $u$ and $v$ as the set $\{u, v\}$, while a directed edge (arc) from $u$ to $v$ is denoted by the ordered pair $(u, v)$. For an edge $e = \{u, v\}$, we write $\text{endpoints}(e)$ to denote the pair of vertices incident to $e$. In pseudocode, the assignment $(u, v) \leftarrow \text{endpoints}(e)$ binds the two endpoints of $e$ to the variables $u$ and $v$ in arbitrary order. \par

For any two vertices $u, v \in V$, let $\lambda(u, v; G)$ denote the local edge-connectivity, defined as the maximum number of edge-disjoint paths between $u$ and $v$ in $G$. The global edge-connectivity of $G$, denoted by $\lambda(G)$, is the minimum local edge-connectivity over all pairs of vertices, i.e. $\lambda(G) = \min\limits_{u, v \in V} \lambda(u, v; G)$. \par 

The parameter $k$ represents the required global edge-connectivity and is treated as a fixed integer constant throughout the paper. \par

For any subset of vertices $S \subseteq V$ and $S \neq \emptyset$, let $G[S] = (S, E[S])$ denote the subgraph induced by $S$, where $E[S] = \{\{u, v\} \in E \mid u, v \in S\}$ represents the set of edges with endpoints entirely contained within $S$. A cut is a partition of the vertex set into two disjoint sets $(S, \bar{S})$ such that $S \cup \bar{S} = V$ and $S \cap \bar{S} = \emptyset$.  The cardinality of the cut is the number of edges with one endpoint in $S$ and the other in $\bar{S}$. Namely,

$$|(S, \bar{S})| = |E \setminus (E[S] \cup E[\bar{S}])|.$$

Throughout this paper, we use the terms cut size and cut cardinality interchangeably. By Menger’s theorem~\cite{Diestel2017}, the local edge-connectivity $\lambda(u, v; G)$ is equivalent to the size of the minimum cut separating $u$ and $v$. \par

We use $G \cup \{e\}$ and $G \setminus \{e\}$ to represent the graph resulting from the insertion or deletion of edge $e$, respectively. In the context of our edge-connectivity maintenance algorithms, $e_{new}$ refers to a newly inserted edge, while $e_{old}$ refers to an existing edge identified for removal during Redundancy Elimination. Similarly, $e_{del}$ denotes an edge targeted for deletion by Connectivity Restoration, and $E_{aug}$ represents the set of augmenting edges computed by the framework to restore edge-connectivity. \par

To address problems defined in Section~\ref{section:introduction} efficiently, we rely on a fundamental graph-theoretic property regarding the locality of connectivity changes. The following lemma characterizes the structural impact of removing an edge from a $k$-edge-connected graph,

\begin{lemma}
\label{lemma:cut-localization-lemma}
Let $G = (V, E)$ be an undirected and $k$-edge-connected graph and $e = \{u, v\}$ be an edge in $E$. Let $G' = G \setminus \{e\}$. If $\lambda(G') < k$, then $\lambda(u, v; G) = k - 1$. Furthermore, any cut in $G'$ which cardinality is less than $k$ must separate $u$ and $v$.
\end{lemma}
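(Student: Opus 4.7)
The plan is to prove the structural (cut localization) half of the lemma first, and obtain the numerical conclusion as a direct consequence. The key elementary observation is that if $e = \{u,v\}$ is deleted, then for every cut $(S, \bar S)$ the cardinality drops by exactly one when $e$ crosses the cut (i.e., $u$ and $v$ lie on opposite sides of the partition) and is unchanged otherwise. Thus, for every partition of $V$, one has $|E_G(S,\bar S)| - 1 \le |E_{G'}(S,\bar S)| \le |E_G(S,\bar S)|$.

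First I would handle the second claim. Fix any cut $(S, \bar S)$ of $G'$ with $|E_{G'}(S,\bar S)| < k$. By $k$-edge-connectivity of $G$, the same partition satisfies $|E_G(S,\bar S)| \ge k$. Combined with the one-edge gap above, this forces $|E_G(S,\bar S)| = k$ and $|E_{G'}(S,\bar S)| = k-1$, and, crucially, $e$ must cross the partition. Hence $u$ and $v$ lie on opposite sides of $(S,\bar S)$, which is the desired localization statement.

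For the numerical claim, I would combine Menger's theorem with a simple sandwiching argument. The hypothesis $\lambda(G') < k$ guarantees at least one cut in $G'$ of cardinality strictly less than $k$; by the previous paragraph this cut separates $u$ from $v$ and has cardinality exactly $k-1$. Menger's theorem applied in $G'$ therefore yields $\lambda(u,v;G') \le k-1$. In the other direction, deleting a single edge can break at most one of the edge-disjoint $u$-$v$ paths witnessing $\lambda(u,v;G) \ge k$, giving $\lambda(u,v;G') \ge k-1$. Together these bounds imply $\lambda(u,v;G') = k-1$ (which, read against the $k$-edge-connectivity of $G$, is presumably the intended statement; the expression $\lambda(u,v;G)$ in the lemma is at least $k$ by hypothesis and should be understood as referring to the graph after the deletion).

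I do not anticipate a genuine obstacle here: the argument is purely combinatorial and rests on the one-edge perturbation bound for cut cardinalities together with Menger's theorem. The only point requiring care is keeping straight which graph ($G$ or $G'$) each cut count refers to, so that the inequality $|E_G(S,\bar S)| \ge k$ is invoked only in $G$ and Menger's bound is invoked only in $G'$.
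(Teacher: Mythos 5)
Your proof is correct and follows essentially the same route as the paper's: the one-edge perturbation bound on cut cardinalities combined with the $k$-edge-connectivity of $G$, with Menger's theorem and the ``deleting one edge breaks at most one path'' bound used to close the numerical claim explicitly (a step the paper's proof leaves implicit). You are also right to flag the statement's $\lambda(u,v;G) = k-1$ as a slip that should read $\lambda(u,v;G') = k-1$, since $\lambda(u,v;G) \ge k$ follows from the hypothesis that $G$ is $k$-edge-connected.
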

\begin{proof}
Consider any cut $(S, \bar{S})$ in $G'$ with cardinality strictly less than $k$. Since the original graph $G$ is $k$-edge-connected, the cardinality of this cut in $G$ must have been at least $k$. The removal of the single edge $e=\{u, v\}$ can reduce size of a cut by at most one, and this occurs if and only if $e$ crosses the cut (i.e., $u \in S$ and $v \in \bar{S}$ or vice versa). Therefore, for the cut's capacity to drop below $k$ in $G'$, its original capacity must have been exactly $k$, and it must separate $u$ and $v$, resulting in a final cardinality of exactly $k-1$.
\end{proof}

We use this lemma directly in the restoration algorithm of Section~\ref{section:restoration-after-edge-removal}.\par

\section{Restoration after Edge Addition}
\label{section:restoration-after-edge-addition}

\subsection{Sparse certificate maintenance}
\label{subsection:sparse-certificate-maintenance}
To identify and eliminate redundant edges, we use a sparse certificate as introduced by Nagamochi and Ibaraki~\cite{Nagamochi92}. The edge set of a graph $G$ can be decomposed into a sequence of edge-disjoint spanning forests $F_{1}, F_{2},\ldots, F_{m}$. For maintaining $k$-edge-connectivity, it suffices to maintain only the first $k$ forests, which form the sparse certificate $G_{cert} = \bigcup_{i=1}^{k} F_i$. The following lemma establishes that this subgraph preserves the relevant edge-connectivity properties:

\begin{lemma}[From~\cite{Nagamochi92}]
\label{lemma:sparse-cert-lemma}
For a simple graph $G = (V, E)$, let $F_i = (V, E_i)$ be a
maximal spanning forest in $G - E_1 \cup E_2 \cup \dots \cup E_{i-1}$ for $i = 1, 2, \dots, |E|$, where
possibly $E_i = E_{i+1} = \dots = E_{|E|} = \emptyset$ for some $i$. Then each spanning subgraph
$G_i = (V, E_1 \cup E_2 \cup \dots \cup E_i)$ satisfies
\begin{equation}
\lambda(x, y; G_i) \ge \min\{\lambda(x, y; G), i\} \quad \text{for all } x, y \in V,
\end{equation}
where $\lambda(x, y; H)$ denotes the local edge-connectivity between $x$ and $y$ in graph $H$.
\end{lemma}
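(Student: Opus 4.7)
\medskip

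\noindent\textbf{Proof proposal.} My plan is to argue cut-by-cut, exploiting maximality of each forest $F_j$ together with Menger's theorem. Fix two vertices $x, y \in V$ and an arbitrary cut $(S,\bar S)$ that separates them; write $c_G := |\delta_G(S,\bar S)|$ for the number of edges of $G$ crossing this cut, and $c_j := |\delta_{F_j}(S,\bar S)|$ for the number of forest-$j$ edges crossing it. By Menger's theorem, $\lambda(x,y;G_i)$ equals the minimum of $\sum_{j=1}^{i} c_j$ over all $x,y$-separating cuts, so it is enough to show $\sum_{j=1}^{i} c_j \ge \min\{c_G, i\}$ for every such cut.

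The core step is the following claim, which I would prove by contradiction from the maximality of $F_j$: for each $j \le i$, either $c_j \ge 1$, or all $c_G$ cut-crossing edges of $G$ have already been placed in $F_1 \cup \dots \cup F_{j-1}$. Indeed, suppose $c_j = 0$ while some edge $e = \{u,v\}$ with $u \in S$, $v \in \bar S$ still lies in $G \setminus (E_1 \cup \dots \cup E_{j-1})$. Since $F_j$ has no edge crossing $(S,\bar S)$, every connected component of $F_j$ is contained entirely in $S$ or entirely in $\bar S$; in particular $u$ and $v$ lie in different components of $F_j$, so $F_j \cup \{e\}$ is still acyclic. This contradicts the assumption that $F_j$ is a \emph{maximal} spanning forest of the residual graph.

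From this claim I would conclude the lemma by a simple case split. If, among the first $i$ forests, fewer than $c_G$ cut-crossing edges have been consumed after every step, then by the claim each individual $F_j$ contributes at least one cut edge, yielding $\sum_{j=1}^{i} c_j \ge i$. Otherwise, at some point $j^\star \le i$ all $c_G$ cut-crossing edges of $G$ have been distributed among $F_1,\dots,F_{j^\star}$, so $\sum_{j=1}^{i} c_j = c_G$. In either case $\sum_{j=1}^{i} c_j \ge \min\{c_G, i\}$. Taking the minimum over all $x,y$-separating cuts and using $c_G \ge \lambda(x,y;G)$ for every such cut yields $\lambda(x,y;G_i) \ge \min\{\lambda(x,y;G), i\}$.

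The only delicate point is the maximality argument in the middle paragraph: one must be careful that ``maximal spanning forest'' is interpreted in the acyclic-subgraph sense (adding any remaining edge creates a cycle), rather than merely as a spanning forest of maximum edge count of some fixed subgraph; both are equivalent here, but the cycle-based formulation is what makes the contradiction work cleanly. Beyond that, the proof is a short structural argument and does not require any flow computation.
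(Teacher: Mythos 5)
The paper does not prove Lemma~\ref{lemma:sparse-cert-lemma} at all: it is imported verbatim from Nagamochi and Ibaraki~\cite{Nagamochi92} and used as a black box, so there is no in-paper argument to compare against. Your proposal is a correct, self-contained proof, and it is essentially the classical cut-counting argument for this result. The key claim is sound: if $F_j$ has no edge crossing an $x,y$-separating cut $(S,\bar S)$, then every component of $F_j$ lies wholly in $S$ or in $\bar S$, so any surviving crossing edge of the residual graph $G - (E_1\cup\dots\cup E_{j-1})$ could be added to $F_j$ without creating a cycle, contradicting maximality; hence each $F_j$ either picks up a crossing edge or the crossing edges are already exhausted. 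The case split then gives $\sum_{j=1}^i c_j \ge \min\{c_G, i\}$ per cut, and minimizing over cuts via Menger's theorem (using that the $E_j$ are pairwise disjoint, so the crossing edges of $G_i$ are counted exactly by $\sum_{j\le i} c_j$) yields the stated bound. Your closing remark about the two readings of ``maximal'' is apt but harmless here, since for forests the two notions coincide; note also that your argument nowhere uses simplicity, which is consistent with the paper's observation that the lemma extends to multigraphs.
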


In particular, if $k \leq \lambda(G)$, the subgraph $G_{k} = \left(V, \bigcup_{i = 1}^{k} E_{i}\right)$ is itself $k$-edge-connected. \par

\begin{algorithm}
\caption{Finding $k$-edge-connected spanning subgraph of a given graph $G$ (from~\cite{Nagamochi92}).}
\label{algorithm:finding-k-edge-connected-spanning-subgraph}
\begin{algorithmic}[1]
\State \textbf{Input:} Graph $G = (V, E)$
\State \textbf{Output:} Partition $E_1, E_2, \dots, E_{|E|}$ of $E$
\State $E_1 := E_2 := \dots := E_{|E|} := \emptyset$;
\State Label all nodes $v \in V$ and all edges $e \in E$ ``unscanned'';
\State $r(v) := 0$ for all $v \in V$;
\While{there exist ``unscanned'' nodes}
    \State Choose an ``unscanned'' node $x \in V$ with the largest $r(x)$; \label{line:choosing-with-largest}
    \For{each ``unscanned'' edge $e$ incident to $x$}
        \State Let $y$ be the other end node ($\neq x$) of $e$;
        \State $E_{r(y)+1} := E_{r(y)+1} \cup \{e\}$; \label{line:edge-addition}
        \If{$r(x) = r(y)$}
            \State $r(x) := r(x) + 1$;
        \EndIf
        \State $r(y) := r(y) + 1$;
        \State Mark $e$ ``scanned'';
    \EndFor
    \State Mark $x$ ``scanned'';
\EndWhile
\end{algorithmic}
\end{algorithm}

The sparse certificate is constructed using the linear-time sparsification method of Nagamochi and Ibaraki~\cite{Nagamochi92}, presented in Algorithm~\ref{algorithm:finding-k-edge-connected-spanning-subgraph}. The procedure partitions $E$ into a sequence of edge-disjoint forests $E_{1}, E_{2},\ldots, E_{|E|}$ in a single pass over the graph. Each vertex $v$ maintains a rank $r(v)$, initially set to zero, that tracks the number of forests containing an edge incident to $v$. At each step, the algorithm selects an unscanned vertex $x$ with maximum rank and assigns each unscanned edge $e = \{x, y\}$ to forest $E_{r(y)+1}$. Since $y$ has no edge in $E_{r(y)+1}$ at the time of processing, this assignment preserves acyclicity. Ranks are then updated: $r(x)$ (if $r(x) = r(y)$) and $r(y)$ are incremented. Each edge is thus placed into the lowest-index forest that can accept it in $O(1)$ time, yielding an overall running time of $O(m)$. We use this procedure as a preprocessing step to construct the sparse certificate before handling dynamic edge additions. \par

For dynamic maintenance of the sparse certificate we use the Link-Cut tree data structure of Sleator and Tarjan~\cite{Tarjan81}. A Link-Cut tree maintains a forest of rooted trees. Each represented tree is partitioned into vertex-disjoint paths via a preferred-path decomposition, and each such path is stored as a splay tree~\cite{Tarjan85} keyed by the depth of its nodes in the represented tree. The auxiliary splay trees are connected by path-parent pointers, which link the root of each auxiliary tree to the parent (in the represented tree) of the topmost node on its preferred path. We use the following operations~\cite{Tarjan81,Demaine2012}:
\begin{itemize}
    \item \textsc{MakeTree}($v$) -- creates a new singleton tree containing the vertex $v$,
    \item \textsc{FindRoot}($v$) -- returns the root of the tree containing vertex $v$,
    \item \textsc{link}($v$, $w$) --  adds an edge making $v$ a child of $w$, assuming $v$ is the root of its tree and $v$, $w$ belong to distinct trees,
    \item \textsc{cut}($v$) -- removes the edge between vertex $v$ and its parent \textsc{parent}($v$) (where $v$ is not the root), splitting the tree into two disjoint components and makes $v$ the root of a represented tree,
    \item \textsc{evert}($v$) -- modify the tree containing vertex $v$ by making $v$ its root.
\end{itemize}  

Each operation runs in $O(\log n)$ amortized time. To initialize the structure, we compute the sparse certificate using Algorithm~\ref{algorithm:finding-k-edge-connected-spanning-subgraph} and construct $k$ Link-Cut trees $LC_{1}, \dots, LC_{k}$, where each $LC_{i}$ is built by executing \textsc{MakeTree}($v$) for every $v \in V$ followed by \textsc{link} operations for every edge in the corresponding forest $F_{i}$. \par

\begin{algorithm}
\caption{Function to modify forest represented by Link-Cut structure.}
\label{algorithm:try-add}
\begin{algorithmic}[1]
\State \textbf{function} \textsc{TryAdd}($LC, u, v$)
    \State \quad $uRoot \leftarrow LC.\textsc{FindRoot}$($u$) \label{line:find-u-root}
    \State \quad $vRoot \leftarrow LC.\textsc{FindRoot}$($v$) \label{line:find-v-root}

    \item[]

    \If{$uRoot \neq vRoot$}
        \State $LC.\textsc{evert}$($u$)
        \State $LC.\textsc{link}$($u$, $v$) \label{line:invocation-of-link}
        \State \Return \textbf{null}
    \Else
        \State $p \leftarrow LC.\textsc{parent}$($u$)
        \State $LC.\textsc{cut}$($u$) \label{line:cut-u}
        \State $LC.\textsc{link}$($u$, $v$) \label{line:link-u-vroot}
        \State \Return $(u, p)$
    \EndIf
\State \textbf{end function}
\end{algorithmic}
\end{algorithm}

\begin{algorithm}[p]
\caption{Handling edge addition in a sparse certificate.}
\label{algorithm:handling-edge-additions}
\begin{algorithmic}[1]
\State \textbf{procedure} \textsc{HandleAddition}($e$)
\State $(u, v) \leftarrow \text{endpoints}(e)$ \label{line:retrieving-edge-enpoints} \Comment{1. Define endpoints of edge e}

    \State $result \leftarrow \textsc{TryAdd}(LC_1, u, v)$
    \item[]
    \If{$result \neq$ \textbf{null}}
        \State $(x, y) \leftarrow \text{endpoints}(result)$
        \State $i \leftarrow 2$
        \While{$i \leq k$} \label{line:while-loop-beg}
            \State $result \leftarrow \textsc{TryAdd}$($LC_{i}$, $x$, $y$)
            \If{$result = \textbf{null}$}
                \textbf{break}
            \Else
                \State $(x, y) \leftarrow \text{endpoints}(result)$ 
                \State $i \leftarrow i + 1$
            \EndIf
        \EndWhile
    \EndIf
\State \textbf{end procedure}
\end{algorithmic}
\end{algorithm}

\begin{figure}[p]
  \centering
  \begin{tikzpicture}
    \def\offsetA{0}
    \def\offsetB{5}
    \def\offsetC{10}

    \begin{scope}[xshift=\offsetA cm]
      \node[font=\bfseries] at (1, 2.2) {$F_1$};
     
      \node[draw, circle, inner sep=2pt, minimum size=20pt] (1a1) at (0, 1) {$v_1$};
      \node[draw, circle, inner sep=2pt, minimum size=20pt] (1a2) at (1, 1) {$v_2$};
      \node[draw, circle, inner sep=2pt, minimum size=20pt] (1a3) at (2, 1) {$v_3$};
      
      \node[draw, circle, inner sep=2pt, minimum size=20pt] (1b1) at (0, -1) {$v_4$};
      \node[draw, circle, inner sep=2pt, minimum size=20pt] (1b2) at (1, -1) {$v_5$};
      \node[draw, circle, inner sep=2pt, minimum size=20pt] (1b3) at (2, -1) {$v_6$};
      
      \draw[thick] (1a1) -- (1a2);         
      \draw[thick] (1a2) -- (1a3);          
      \draw[thick] (1a1) -- (1b2);         
      \draw[thick] (1a1) -- (1b3);         
     
      \draw[orange, dashed, thick] (1a1) -- (1b1);  
     
      \draw[red, dashed, thick] (1b1) to[bend right=60] (1b3);    
    \end{scope}

    \draw[->, >=stealth, thick, orange, dashed] (2.4, 0) -- node[above, font=\footnotesize] {$\{v_1, v_4\}$} (4.7, 0);

    \begin{scope}[xshift=\offsetB cm]
      \node[font=\bfseries] at (1, 2.2) {$F_2$};
      
      \node[draw, circle, inner sep=2pt, minimum size=20pt] (2a1) at (0, 1) {$v_1$};
      \node[draw, circle, inner sep=2pt, minimum size=20pt] (2a2) at (1, 1) {$v_2$};
      \node[draw, circle, inner sep=2pt, minimum size=20pt] (2a3) at (2, 1) {$v_3$};
     
      \node[draw, circle, inner sep=2pt, minimum size=20pt] (2b1) at (0, -1) {$v_4$};
      \node[draw, circle, inner sep=2pt, minimum size=20pt] (2b2) at (1, -1) {$v_5$};
      \node[draw, circle, inner sep=2pt, minimum size=20pt] (2b3) at (2, -1) {$v_6$};
     
      \draw[thick] (2a2) -- (2b1);         
      \draw[thick] (2a3) -- (2b1);          
      \draw[thick] (2a2) -- (2b2);          
      \draw[thick] (2a2) -- (2b3);         
     
      \draw[green!60!black, dashed, thick] (2a1) -- (2b1);  
    \end{scope}

    \begin{scope}[xshift=\offsetC cm]
      \node[font=\bfseries] at (1, 2.2) {$F_3$};
     
      \node[draw, circle, inner sep=2pt, minimum size=20pt] (3a1) at (0, 1) {$v_1$};
      \node[draw, circle, inner sep=2pt, minimum size=20pt] (3a2) at (1, 1) {$v_2$};
      \node[draw, circle, inner sep=2pt, minimum size=20pt] (3a3) at (2, 1) {$v_3$};
     
      \node[draw, circle, inner sep=2pt, minimum size=20pt] (3b1) at (0, -1) {$v_4$};
      \node[draw, circle, inner sep=2pt, minimum size=20pt] (3b2) at (1, -1) {$v_5$};
      \node[draw, circle, inner sep=2pt, minimum size=20pt] (3b3) at (2, -1) {$v_6$};
      
      \draw[thick] (3a3) -- (3b2);         
      \draw[thick] (3b1) -- (3b2);          
      \draw[thick] (3b2) -- (3b3);         
    \end{scope}

    \begin{scope}[yshift=-2.5cm, xshift=3cm]
      \draw[red, dashed, thick] (0, 0) -- (0.8, 0);
      \node[right, font=\footnotesize] at (0.9, 0) {inserted edge};
      \draw[orange, dashed, thick] (3.5, 0) -- (4.3, 0);
      \node[right, font=\footnotesize] at (4.4, 0) {displaced edge};
      \draw[green!60!black, dashed, thick] (7, 0) -- (7.8, 0);
      \node[right, font=\footnotesize] at (7.9, 0) {absorbed edge};
    \end{scope}
  \end{tikzpicture}
  \caption{\textsc{HandleAddition}($\{v_4, v_6\}$): edge $\{v_4, v_6\}$ displaces $\{v_1, v_4\}$ from $F_1$; edge $\{v_1, v_4\}$ is absorbed into $F_2$. $F_3$ remains unchanged.}
  \label{fig:cascade}
\end{figure}

The core building block of the maintenance procedure is the \textsc{TryAdd}($LC$, $u$, $v$) function (Algorithm~\ref{algorithm:try-add}), which attempts to insert an edge $\{u, v\}$ into the forest $F_i$ maintained by $LC_i$ $(1 \leq i \leq k)$. If $u$ and $v$ lie in different components of $F_i$, the edge is added via \textsc{link} and the function returns null. Otherwise, $u$ and $v$ are already connected in $F_i$, so adding $\{u, v\}$ would create a cycle. In this case, the function removes the edge $\{u, \textsc{parent}(u)\}$ via \textsc{cut}, inserts $\{u, v\}$ in its place via \textsc{link}, and returns the displaced edge $\{u, \textsc{parent}(u)\}$ to the caller for insertion into the next forest in the sparse certificate. \par

The primary maintenance routine, \textsc{HandleAddition}($e$) (Algorithm~\ref{algorithm:handling-edge-additions}), uses \textsc{TryAdd} as a subroutine in a cascading fashion. Given a new edge $e = \{u, v\}$, the procedure calls \textsc{TryAdd} on $LC_1$. If the edge is absorbed (\textsc{TryAdd} returns null), the process terminates. Otherwise, the displaced edge is passed to \textsc{TryAdd} on $LC_2$, and so on through the sequence of forests up to $LC_k$. If an edge is displaced from the final forest $F_k$, it is redundant for $k$-edge-connectivity and is discarded. \par

\subsection{Correctness analysis}
\label{subsection:correctness-analysis}

We first establish that \textsc{TryAdd} preserves the forest structure (Fact~\ref{fact:spanning-forest-properties}, Corollary~\ref{corollary:try-add-correctness}), then that component connectivity is invariant under edge swaps (Lemma~\ref{lemma:sets-of-connected-components}), and finally that \textsc{HandleAddition} preserves $k$-edge-connectivity (Lemma~\ref{lemma:two-endpoints-connected-in-fk}, Theorem~\ref{theorem:edge-cascade-correctness}).

\begin{fact}
\label{fact:spanning-forest-properties}
Let $F = (V, E_F)$ be a spanning forest and let $e = \{u, v\} \notin E_F$. 
\begin{enumerate}
    \item If $u$ and $v$ are in different connected components of $F$, then $(V, E_F \cup \{e\})$ is a forest.
    \item If $u$ and $v$ are in the same connected component and $P_{uv}$ is the path connecting them in $F$, then for any $e' \in P_{uv}$, the graph $(V, (E_F \cup \{e\}) \setminus \{e'\})$ is a spanning forest.
\end{enumerate}
\end{fact}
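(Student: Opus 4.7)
The plan is to reduce both parts to two elementary properties of forests: first, that a graph on vertex set $V$ is a forest if and only if it contains no cycle; and second, that within any connected component of a forest, there is a unique path between any two vertices. Since neither modification alters the vertex set, the spanning property is automatic in both parts, so it suffices to verify that the resulting edge set is acyclic.

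For part (1), I would argue by contradiction. Suppose $F' = (V, E_F \cup \{e\})$ contains a cycle $C$. Because $F$ itself is acyclic, $C$ must include the new edge $e = \{u,v\}$. Then $C \setminus \{e\}$ would constitute a path in $F$ from $u$ to $v$, contradicting the hypothesis that $u$ and $v$ lie in distinct connected components of $F$. Hence $F'$ is acyclic, and therefore a spanning forest.

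For part (2), the plan is to first show that $F \cup \{e\}$ contains exactly one cycle, namely $C^{\ast} = P_{uv} \cup \{e\}$. Existence is immediate since $P_{uv}$ is a $u$-$v$ path in $F$. For uniqueness, note that any cycle in $F \cup \{e\}$ must use $e$ (because $F$ is acyclic), and any cycle using $e$ decomposes as $e$ together with a $u$-$v$ path in $F$, which by the uniqueness of tree paths within a component must coincide with $P_{uv}$. Then, for any $e' \in P_{uv}$, removing $e'$ destroys this unique cycle $C^{\ast}$, so $E'_F = (E_F \cup \{e\}) \setminus \{e'\}$ is acyclic on $V$ and hence defines a spanning forest.

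The only real subtlety, and therefore the main obstacle, is the cycle-uniqueness step in part (2): one must be careful to first exclude cycles that avoid $e$ (using acyclicity of $F$) before invoking uniqueness of the tree path $P_{uv}$ to collapse all remaining candidate cycles to $C^{\ast}$. Beyond this, the argument is purely tree-theoretic and requires no additional machinery.
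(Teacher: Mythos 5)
Your proof is correct. The paper states this as a \emph{Fact} and offers no proof at all, treating it as elementary; your argument supplies exactly the standard details one would expect -- acyclicity by contradiction in part (1), and the observation in part (2) that $F \cup \{e\}$ contains a unique cycle $P_{uv} \cup \{e\}$ which is destroyed by deleting any $e' \in P_{uv}$ -- so there is nothing to compare against and nothing missing.
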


\begin{corollary}
\label{corollary:try-add-correctness}
 Let $F = (V, E_F)$ be the forest represented by Link-Cut tree $LC$, and let $e = \{u, v\}$ be an edge not in $E_F$. After executing \textsc{TryAdd}($LC$, $u$, $v$), the forest represented by $LC$ remains a valid spanning forest.
\end{corollary}

\begin{lemma}
\label{lemma:sets-of-connected-components}
Let $F_i = (V, E_i)$ be a spanning forest and let $F'_i$ be the forest obtained by replacing an edge $\{u, \textsc{parent}(u)\} \in E_i$ with an edge $\{u, v\} \notin E_i$, where $u$ and $v$ are in the same connected component of $F_i$. For any pair of vertices $x, y \in V$, $x$ and $y$ are connected in $F_i$ if and only if they are connected in $F'_i$.
\end{lemma}
\begin{proof}

Let $C$ be the component of $F_i$ containing $u$ and $v$. Removing $\{u, \textsc{parent}(u)\}$ partitions $V(C)$ into two subsets: $V_u$, the vertices of the subtree rooted at $u$, and $V(C) \setminus V_u$. Since $v \in V(C) \setminus V_u$, adding $\{u, v\}$ reconnects these two subsets into a single connected component. No edges incident to vertices outside $V(C)$ are affected, so the remaining components of $F_i$ are unchanged.
\end{proof}

\begin{lemma}
\label{lemma:two-endpoints-connected-in-fk}
If vertices $u, v \in V$ are connected in forest $F_{k}$, then they are connected in every preceding forest $F_{j}$ for $1 \leq j < k$.
\end{lemma}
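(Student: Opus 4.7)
The plan is to strengthen the statement into an edge-level invariant and maintain it throughout the lifetime of the data structure: for every $2 \le j \le k$, every edge $\{x,y\} \in F_j$ satisfies that $x$ and $y$ lie in the same connected component of $F_{j-1}$. This invariant implies the lemma by a straightforward path argument. If $u$ and $v$ are connected in $F_k$ via a path $u = w_0, w_1, \dots, w_p = v$, the invariant at level $k$ places each consecutive pair $(w_{i-1}, w_i)$ in a common $F_{k-1}$-component, so transitivity yields $u, v$ connected in $F_{k-1}$; iterating the descent through $F_{k-2}, \dots, F_1$ finishes the reduction.

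The proof proceeds by induction on the number of structural operations applied to the sparse certificate. For the base case, the state produced by Algorithm~\ref{algorithm:finding-k-edge-connected-spanning-subgraph} satisfies the invariant: the maximum-adjacency scan ensures that when an edge $e = \{x, y\}$ is processed from scanner $x$ and placed into $E_{r(y)+1} = E_j$, the running ranks satisfy $r(x) \ge r(y) = j-1$, and both endpoints have already been attached through previously scanned paths in $E_1, \dots, E_{j-1}$. This is a standard consequence of the Nagamochi--Ibaraki ordering, verifiable by a short induction on the number of scanned edges.

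For the inductive step, I analyze a single call to \textsc{HandleAddition}. Its cascade performs displacements, which preserve components of each $F_i$ by Lemma~\ref{lemma:sets-of-connected-components}, at levels $1, \dots, i-1$, and then either executes a terminal link at level $i$ that merges two components, or propagates past $F_k$ and discards the ejected edge. Two kinds of newly inserted edges must be verified. A displacement at level $i \ge 2$ inserts an edge $\{u, v_{\mathrm{root}}\}$ whose endpoints were already co-component in $F_i$ before the swap (that is precisely why the displacement occurred); applying the prior invariant along the $F_i$-path between them places them in a common $F_{i-1}$-component. A terminal link at level $i \ge 2$ inserts the very edge that was displaced from $F_{i-1}$ in the previous cascade step, and Lemma~\ref{lemma:sets-of-connected-components} guarantees that its endpoints remain in the same $F_{i-1}$-component after that displacement. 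All previously existing edges continue to satisfy the invariant because the only change in component structure is a single merge at level $i$, which can only make the invariant easier to satisfy in the forest immediately above.

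The main obstacle will be the disciplined bookkeeping around the terminal link step: I must correctly identify which edge is actually being inserted into $F_i$ (the edge displaced from $F_{i-1}$ in the previous step of the cascade, not the original input $e$) and invoke Lemma~\ref{lemma:sets-of-connected-components} to certify that its endpoints remain co-component in $F_{i-1}$ after the cut-and-link applied there. A secondary subtlety is the base case: although the Nagamochi--Ibaraki ordering is folklore, cleanly stating the invariant for the initial certificate requires tracking the rank updates and confirming that the forests in which each vertex appears always form an initial segment $E_1, \dots, E_{r(v)}$.
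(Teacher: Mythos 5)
Your proposal is correct in outline, but it takes a genuinely different route from the paper. The paper's proof is static: it observes that $E_k$ is contained in the edge set $E^{j}_{available} = E \setminus \bigl(\bigcup_{i=1}^{j-1} E_i\bigr)$ still available when $F_j$ is built, so any $F_k$-path between $u$ and $v$ witnesses their connectivity in $(V, E^{j}_{available})$, and the maximality of $F_j$ as a spanning forest of that edge set forces $u$ and $v$ to be connected in $F_j$. That argument is shorter, but it leans on the forests retaining their Nagamochi--Ibaraki characterization (each $F_j$ maximal in the residue of its predecessors), which is literally established only for the initial certificate; the paper does not separately verify that this characterization survives the Link-Cut cascades of \textsc{HandleAddition}. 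Your approach instead strengthens the claim to the edge-level invariant that every edge of $F_j$ has its endpoints co-component in $F_{j-1}$, proves it for the initial certificate, and then carries it through each \textsc{TryAdd} cascade by induction on operations, invoking Lemma~\ref{lemma:sets-of-connected-components} to certify that displaced edges keep co-component endpoints and noting that terminal links only merge components. This costs more bookkeeping (in particular the base case, which you currently assert as folklore and should still be written out via the same maximality argument the paper uses), but it buys a statement that is valid for the dynamically maintained forests, which is the form in which the lemma is actually applied in the redundancy theorem; in that sense your argument is the more robust of the two.
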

\begin{proof}
Fix $j$ with $1 \leq j < k$. Since the forests $E_1,\ldots, E_{|E|}$ are pairwise edge-disjoint and $j < k$, we have $E_k \subseteq E \setminus \left(\bigcup_{i = 1}^{j - 1} E_i\right)$. If $u$ and $v$ are connected in $F_k$, there exists a path $P_{uv} \subseteq E_k$, and by the inclusion above, $P_{uv} \subseteq E \setminus \left(\bigcup_{i = 1}^{j - 1} E_i\right)$. Since $F_j$ is a maximal spanning forest of $E \setminus \left(\bigcup_{i = 1}^{j - 1} E_i\right)$ (Lemma 2.5 in~\cite{Nagamochi92}), $u$ and $v$ are connected in $F_j$.
\end{proof}

The following theorem is the central correctness result for \textsc{HandleAddition} algorithm; its proof combines Lemmas~\ref{lemma:sets-of-connected-components} and~\ref{lemma:two-endpoints-connected-in-fk}.

\begin{theorem}
\label{theorem:edge-cascade-correctness}
Let $G' = \left(V, \bigcup_{i=1}^k E_i \right)$ be the sparse certificate maintained by Algorithm~\ref{algorithm:handling-edge-additions}. If \textsc{HandleAddition}($e_{new}$) discards an edge $e_{old} = \{u, v\}$ ($e_{new} \neq e_{old}$), then $\lambda(u, v; \left(G' \cup \{e_{new}\} \right) \setminus \{e_{old}\}) \geq k$.
\end{theorem}
\begin{proof}
We maintain the following invariant: after every \textsc{TryAdd} operation, each forest $F_j$ ($1 \leq j \leq k$) is a maximal spanning forest of $E \setminus (\bigcup_{i = 1}^{j - 1} E_i)$, and the forests $E_1,\ldots, E_{k}$ remain pairwise edge-disjoint. The invariant holds initially by the Nagamochi–Ibaraki construction (Lemma~\ref{lemma:sparse-cert-lemma}). We first show it is preserved by \textsc{HandleAddition}, then use it to conclude $e_{old}$ redundancy. \par
We will consider two cases. First, let us assume that \textsc{TryAdd}($LC_j$, $x$, $y$) returns null value (edge is absorbed into forest $F_j$). The edge $\{x, y\}$ connects two connected components of $F_j$. Since two connected components of $F_j$ are connected by $\{x, y\}$, any edge from $\bigcup_{i = j + 1}^{k} E_i$ still induce a cycle in $F_j$, so $F_j$ remains maximal. For each $l > j$, the edges are moved between forests with edge sets $E_1, E_2, \ldots, E_j$ -- contained in $\bigcup_{i = 1}^{l - 1} E_i$ -- and the new edge $e_{new}$ is present in $E$ and $E_1$. Thus, the set of available edges $E \setminus \bigcup_{i = 1}^{l - 1} E_i$ remains unchanged, so $F_l$ remains maximal for $l > j$. \par

Now, assume that \textsc{TryAdd}($LC_j$, $x$, $y$) returns non-null value (edge was displaced). The operation replaces one edge in $E_j$ with another, and by Lemma~\ref{lemma:sets-of-connected-components} the connected components of $F_j$ are unchanged. Disjointness is maintained because the displaced edge is removed from $E_j$ before being passed to the next forest. Before the operation, $F_j$ is a maximal spanning forest of $E \setminus \left(\bigcup_{i = 1}^{j - 1} E_i\right)$, so every edge in $\bigcup_{i = j+1}^{k} E_i$ induces a cycle in $F_j$. Since \textsc{TryAdd} preserves connected components of $F_j$ (Lemma~\ref{lemma:sets-of-connected-components}), every such edge still creates a cycle. The displaced edge also has both endpoints in the same component of $F_j$, so it would induce a cycle if re-added. Therefore $F_j$ remains maximal. \par

Let $e_{old} = \{u, w\}$ be the edge discarded from $F_k$. Since $w = $\textsc{ parent}($u$), the vertices $u$ and $w$ were in the same connected component of $F_k$ before the swap, and they remain so afterward by Lemma~\ref{lemma:sets-of-connected-components}. By Lemma~\ref{lemma:two-endpoints-connected-in-fk}, $u$ and $w$ are connected in every $F_j$ for $1 \leq j \leq k$. Each forest $F_j$ therefore contains a path 
$P_j$ from $u$ to $w$ using edges from $E_j$. Since $E_1, E_2, \ldots, E_k$ are pairwise disjoint, the paths $P_1, P_2, \ldots, P_k$ are edge-disjoint. Hence $\lambda(u, w; G') \geq k$. \par
\end{proof}

\subsection{Complexity analysis of sparse certificate maintenance algorithm}
\label{subsection:complexity-of-sparse-certificate-maintenance}

Initialization consists of two steps: computing the forest decomposition $F_1, F_2, \ldots, F_k$ via Algorithm~\ref{algorithm:finding-k-edge-connected-spanning-subgraph} in $O(m)$ time, and building the $k$ Link-Cut structures $LC_1, LC_2, \ldots, LC_k$. The latter requires $O(kn)$ \textsc{MakeTree} and \textsc{link} operations, each costing $O(\log n)$ amortized time, for a total initialization cost of $O(m + k \cdot n \log n)$. \par

For each edge addition, \textsc{HandleAddition} calls \textsc{TryAdd} on at most $k$ forests. Each \textsc{TryAdd} call performs $O(1)$ Link-Cut tree operations (\textsc{FindRoot}, \textsc{cut}, \textsc{link}), so the amortized cost per edge addition is $O(k \log n)$. \par

\section{Restoration after Edge Removal}
\label{section:restoration-after-edge-removal}

\subsection{Edge-connectivity restoration algorithm}
\label{subsection:connectivity-restoration-algorithm}

This section addresses restoring $k$-edge-connectivity after an edge deletion. Let $e = \{u, v\}$ be the deleted edge and let $G' = G \setminus \{e\}$. If $\lambda(G') \geq k$, no action is needed. Otherwise, the algorithm must compute a set of augmenting edges $E_{aug}$ such that $\lambda(G' \cup E_{aug}) \geq k$. We compute a maximum $u$-$v$ flow in $G'$ using Dinic's algorithm~\cite{Dinitz70}, whose complexity on unit-capacity networks is $O(\min(n^{2/3}, m^{1/2}) \cdot m)$ by Even and Tarjan~\cite{Shimon75}, and use the resulting residual graph to identify $E_{aug}$. \par

We model each undirected edge $\{u, v\} \in E$ as a pair of directed arcs $(u, v)$ and $(v, u)$, each with unit capacity; for background on flow networks and residual graphs we refer to~\cite{Cormen2009}. In the resulting residual graph $G_f$, a saturated arc $(u, v)$ with $f(u, v) = 1$ is replaced by a backward arc $(v, u)$ with residual capacity $2$, while an unsaturated arc retains both directions with residual capacity $1$. \par

The procedure \textsc{HandleDeletion} (Algorithm~\ref{algorithm:handling-edge-deletions}) works as follows. Given the deleted edge $e = \{u, v\}$, it computes a maximum $u$-$v$ flow on $G' = G \setminus \{e\}$. If the flow value is still at least $k$, the edge was redundant and no action is needed. Otherwise, let $G_f$ denote the residual graph. The algorithm computes $S$, the set of vertices reachable from $u$ in $G_f$, and $T$, the set of vertices from which $v$ is reachable in $G_f$; both sets are obtained by BFS or DFS algorithm on $G_f$. \par

\begin{algorithm}[p]
\caption{Procedure for handling edge deletion operations.}
\label{algorithm:handling-edge-deletions}
\begin{algorithmic}[1]

\Procedure{HandleDeletion}{$e$}
    \State $(u, v) \gets \text{endpoints}(e)$ \label{line:retrieving-endpoints}
    \State $G \gets G \setminus \{e\}$
    \State $flow \gets \Call{DinicMaxFlow}{G, u, v}$ \label{line:dynic-max-flow-execution}
    \item[]
    \If{$flow \geq k$} \label{line:nothing-to-do}
        \State \Return
    \ElsIf{$flow = k - 1$}
        \State $G_{f} \gets$ Residual Graph from Dinic's execution on $G$ \label{line:get-residual-graph}
        \State $S \gets \{ w \in V \mid w \text{ is reachable from } u \text{ in } G_{f} \}$
        \State $T \gets \{ w \in V \mid v \text{ is reachable from } w \text{ in } G_{f} \}$ \label{line:compute-set-t}
        \item[]
        \If{$S = \{u\} \wedge T = \{v\}$}
            \State $u' \gets \text{arbitrarily chosen vertex in } G$ \label{line:singleton-get-intermediate-vertex}
            \State $G \gets G \cup \{\{u, u'\}, \{u', v\}\}$ \label{line:add-two-augmenting-edges}
        \Else
            \State $u' \gets \text{arbitrarily chosen vertex from } S$ \label{line:first-bridging-endpoint}
            \State $v' \gets \text{arbitrarily chosen vertex from } T$
            \State $G \gets G \cup \{u', v'\}$ \label{line:add-bridging-edge}
        \EndIf
    \EndIf
\EndProcedure

\end{algorithmic}
\end{algorithm}

\begin{figure}[p]
  \centering
  \begin{subfigure}[b]{0.3\textwidth}
    \centering
    \begin{tikzpicture}
      \node[draw, circle, inner sep=2pt, minimum size=20pt] (1) at (0, 0) {$v_1$};
      \node[draw, circle, inner sep=2pt, minimum size=20pt] (3) at (1.5, 1) {$v_3$};
      \node[draw, circle, inner sep=2pt, minimum size=20pt] (4) at (1.5, -1) {$v_4$};
      \node[draw, circle, inner sep=2pt, minimum size=20pt] (2) at (3, 0) {$v_2$};
     
      \draw[thick] (1) -- (3);
      \draw[thick] (3) -- (2);
      \draw[thick] (3) -- (4);
      \draw[thick] (4) -- (2);
    \end{tikzpicture}
    \caption{$G' = G \setminus \{v_1, v_2\}$}
  \end{subfigure}
  \hfill
  \begin{subfigure}[b]{0.3\textwidth}
    \centering
    \begin{tikzpicture}
     
      \node[draw, circle, inner sep=2pt, minimum size=20pt, fill=blue!20] (1) at (0, 0) {$v_1$};
     
      \node[draw, circle, inner sep=2pt, minimum size=20pt, fill=red!20] (3) at (1.5, 1) {$v_3$};
      \node[draw, circle, inner sep=2pt, minimum size=20pt, fill=red!20] (4) at (1.5, -1) {$v_4$};
      \node[draw, circle, inner sep=2pt, minimum size=20pt, fill=red!20] (2) at (3, 0) {$v_2$};
   
      \draw[->, >=stealth, thick] (3) -- (1);
      \draw[->, >=stealth, thick] (2) -- (3);
     
      \draw[->, >=stealth, thick] (3) to[bend left=12] (4);
      \draw[->, >=stealth, thick] (4) to[bend left=12] (3);
      \draw[->, >=stealth, thick] (4) to[bend left=12] (2);
      \draw[->, >=stealth, thick] (2) to[bend left=12] (4);
    
      \node[blue, font=\footnotesize] at (-0.7, 0) {$S$};
      \node[red, font=\footnotesize] at (3.7, 0) {$T$};
    \end{tikzpicture}
    \caption{Residual graph $G_f$}
  \end{subfigure}
  \hfill
  \begin{subfigure}[b]{0.3\textwidth}
    \centering
    \begin{tikzpicture}
      \node[draw, circle, inner sep=2pt, minimum size=20pt] (1) at (0, 0) {$v_1$};
      \node[draw, circle, inner sep=2pt, minimum size=20pt] (3) at (1.5, 1) {$v_3$};
      \node[draw, circle, inner sep=2pt, minimum size=20pt] (4) at (1.5, -1) {$v_4$};
      \node[draw, circle, inner sep=2pt, minimum size=20pt] (2) at (3, 0) {$v_2$};
      
      \draw[thick] (1) -- (3);
      \draw[thick] (3) -- (2);
      \draw[thick] (3) -- (4);
      \draw[thick] (4) -- (2);
     
      \draw[red, dashed, thick] (1) -- (4);
    \end{tikzpicture}
    \caption{Augmented graph $G''$}
  \end{subfigure}
  \caption{\textsc{HandleDeletion}($\{v_1, v_2\}$): single augmenting edge ($|T| \geq 2$).}
  \label{fig:one-edge-case}
\end{figure}

\begin{figure}[p]
  \centering
 
  \begin{subfigure}[b]{0.3\textwidth}
    \centering
    \begin{tikzpicture}
        \foreach \i/\v in {1/v_1, 2/v_2, 3/v_3, 4/v_4, 5/v_5, 6/v_6} {
            \node[draw, circle, inner sep=2pt, minimum size=20pt] (\i) at ({180 - (\i-1)*60}:1.5cm) {$\v$};
        }
        \foreach \a/\b in {2/3, 3/4, 4/5, 5/6, 6/1} {
            \draw (\a) -- (\b);
        }
    \end{tikzpicture}
    \caption{$G' = G \setminus \{v_1, v_2\}$}
  \end{subfigure}
  \hfill
  \begin{subfigure}[b]{0.3\textwidth}
    \centering
    \begin{tikzpicture}
        
        \node[draw, circle, inner sep=2pt, minimum size=20pt, fill=blue!20] (1) at ({180 - 0*60}:1.5cm) {$v_1$};
        
         \node[draw, circle, inner sep=2pt, minimum size=20pt, fill=red!20] (2) at ({180 - 1*60}:1.5cm) {$v_2$};
       
         \foreach \i/\v in {3/v_3, 4/v_4, 5/v_5, 6/v_6} {
            \node[draw, circle, inner sep=2pt, minimum size=20pt] (\i) at ({180 - (\i-1)*60}:1.5cm) {$\v$};
         }
        
        \draw[->, >=stealth, thick] (2) -- (3);
        \draw[->, >=stealth, thick] (3) -- (4);
        \draw[->, >=stealth, thick] (4) -- (5);
        \draw[->, >=stealth, thick] (5) -- (6);
        \draw[->, >=stealth, thick] (6) -- (1);
        Labels for S and T
        \node[blue, font=\footnotesize] at (180:2.7cm) {$S = \{v_1\}$};
        \node[red, font=\footnotesize] at (105:2.0cm) {$T = \{v_2\}$};
    \end{tikzpicture}
    \caption{Residual graph $G_f$}
  \end{subfigure}
  \hfill
  \begin{subfigure}[b]{0.3\textwidth}
    \centering
    \begin{tikzpicture}
        
        \foreach \i/\v in {1/v_1, 2/v_2, 3/v_3, 4/v_4, 5/v_5, 6/v_6} {
            \node[draw, circle, inner sep=2pt, minimum size=20pt] (\i) at ({180 - (\i-1)*60}:1.5cm) {$\v$};
        }
       
        \draw[thick] (2) -- (3);
        \draw[thick] (3) -- (4);
        \draw[thick] (4) -- (5);
        \draw[thick] (5) -- (6);
        \draw[thick] (6) -- (1);
       
        \draw[red, dashed, thick] (1) -- (4);
        \draw[red, dashed, thick] (4) -- (2);
    \end{tikzpicture}
    \caption{Augmented graph $G''$}
  \end{subfigure}
  \caption{\textsc{HandleDeletion}($\{v_1, v_2\}$): two augmenting edges ($|S| = |T| = 1$).}
  \label{fig:two-edge-case}
\end{figure}

The algorithm distinguishes two cases. If $|S| \geq 2$ or $|T| \geq 2$, there exists a pair $u' \in S$, $v' \in T$ with $\{u', v'\} \neq \{u, v\}$; the algorithm adds $\{u', v'\}$ to $G'$ (Algorithm~\ref{algorithm:handling-edge-deletions}, lines~\ref{line:first-bridging-endpoint}-\ref{line:add-bridging-edge}). If $S = \{u\}$ and $T = \{v\}$, the only edge from $S$ to $T$ is the deleted edge $\{u, v\}$, so instead the algorithm selects an arbitrary vertex $w \in V \setminus \{u, v\}$ and adds the two edges $\{u, w\}$ and $\{w, v\}$ (Algorithm~\ref{algorithm:handling-edge-deletions}, lines~\ref{line:singleton-get-intermediate-vertex}-\ref{line:add-two-augmenting-edges}). \par

\subsection{Correctness and complexity of edge-connectivity restoration algorithm}

We show that edges added by \textsc{HandleDeletion} increases the maximum $u$-$v$ flow by one (Lemma~\ref{lemma:disjointness-and-augmentation}), and that the full procedure restores $\lambda(G') \geq k$ (Theorem~\ref{theorem:algorithm-correctness}). \par

\begin{lemma}
\label{lemma:disjointness-and-augmentation}
  Let $f$ be a maximum $u$-$v$ flow in $G' = G \setminus \{e\}$ ($e = \{u, v\}$), let $G_f$ be the residual graph, and define $S = \{w \in V : w \text{ is reachable from } u \text{ in } G_f\}$ and $T = \{w \in V : v \text{ is reachable from } w \text{ in } G_f\}$. Then:
\begin{enumerate}
    \item $S \cap T = \emptyset$.
    \item For any $u' \in S$ and $v' \in T$, adding the undirected edge $\{u', v'\}$ to $G'$ increases the maximum $u$-$v$ flow by one.
\end{enumerate}
\end{lemma}

\begin{proof}
\begin{enumerate}
\item If $w \in S \cap T$, then $G_f$ contains a path from $u$ to $w$ (since $w \in S$) and a path from $w$ to $v$ (since $w \in T$). Concatenating these paths yields an augmenting $u$-$v$ path, contradicting maximality of $f$.
\item Adding edge $\{u', v'\}$ to $G'$ introduces the arc $(u', v')$ in the residual graph. Since $u' \in S$, there is a path from $u$ to $u'$ in $G_f$; since $v' \in T$, there is a path from $v'$ to $v$ in $G_f$. Concatenating these with the new arc gives an augmenting path from $u$ to $v$, increasing the flow by one.
\end{enumerate}
\end{proof}

\begin{theorem}
\label{theorem:algorithm-correctness}
Let $G' = G \setminus \{e\}$ where $e = \{u, v\}$ and $\lambda(G') < k$. Let $G''$ be the augmented graph produced by Algorithm~\ref{algorithm:handling-edge-deletions}. Then $\lambda(G'') \geq k$, and $|E(G'') \setminus E(G')| \leq 2$.
\end{theorem}
\begin{proof}
By Lemma~\ref{lemma:cut-localization-lemma}, $\lambda(u, v; G') = k - 1$, and every cut of size less than $k$ in $G'$ separates $u$ and $v$. The max $u$-$v$ flow in $G'$ therefore equals $k - 1$.

\begin{enumerate}
    \item Case 1: $|S| \geq 2$ or $|T| \geq 2$. There exist $u' \in S$ and $v' \in T$ with $\{u', v'\} \neq \{u, v\}$. The algorithm adds $\{u', v'\}$. By Lemma~\ref{lemma:disjointness-and-augmentation}, this increases the maximum $u$-$v$ flow to $k$. Since every cut of size less than $k$ separates $u$ and $v$ (Lemma~\ref{lemma:cut-localization-lemma}), and the minimum $u$-$v$ cut now has size $k$, we have $\lambda(G'') \geq k$. One edge is added.
    \item Case 2: $S = \{u\}$ and $T = \{v\}$. The algorithm selects $w \in V \setminus \{u, v\}$ and adds $\{u, w\}$ and $\{w, v\}$. This creates the augmenting path $u \rightarrow w \rightarrow v$, increasing the maximum $u$-$v$ flow to $k$. The same argument as Case 1 gives $\lambda(G'') \geq k$. Two edges are added.
\end{enumerate}

In both cases $|E(G'') \setminus E(G')| \leq 2$ holds.
\end{proof}

The cost of Algorithm~\ref{algorithm:handling-edge-deletions} is dominated by one execution of Dinic's algorithm on the sparsified graph $G'$. After applying the Nagamochi–Ibaraki sparsification~\cite{Nagamochi92}, the graph has $m = O(kn)$ edges while preserving all cuts of size at most $k$. By the Even–Tarjan bound~\cite{Shimon75}, Dinic's algorithm on a unit-capacity network runs in $O(\min(n^{2/3}, m^{1/2}) \cdot m)$ time. Substituting $m = O(kn)$, this yields $O(n^{2/3} \cdot kn) = O(kn^{5/3})$ and $O((kn)^{1/2} \cdot kn) = O(k^{3/2} \cdot n^{3/2})$, respectively. Since $k$ is a fixed constant, $k^{1/2} \cdot n^{1/2} < n^{2/3}$ for sufficiently large $n$, so the second branch dominates and the total cost is $O(k^{3/2} \cdot n^{3/2})$. The subsequent BFS/DFS algorithm to compute $S$ and $T$ runs in $O(kn)$ time and does not affect the asymptotic bound. \par

\bibliographystyle{plain}
\bibliography{bibliography} 

\end{document}